\newcommand{\ket}[1]{| #1 \rangle}
\newtheorem{theorem}{Theorem}
\tikzstyle{block} = [draw, fill=white, rectangle, 
\tikzstyle{sum} = [draw, fill=white, circle, node distance=1cm]
\tikzstyle{input} = [coordinate]
\tikzstyle{output} = [coordinate]
\tikzstyle{pinstyle} = [pin edge={to-,thin,black}]
\newcounter{protocol}
\newenvironment{protocol}[1]
  {\refstepcounter{protocol} \par\addvspace{\topsep}
   \noindent
   \tabularx{\linewidth}{@{} X @{}}
    \hline
     \protected@edef\@currentlabelname{#1} 
    \textbf{Protocol \theprotocol} #1 \\
    \hline}
  { 
   \endtabularx
   \par\addvspace{\topsep}}
    \def\TX@find@end#1{%
    \def\@tempa{#1}%
    \ifx\@tempa\TX@%
       \toks@\expandafter{\the\toks@\AddBeforeEndtabularx}%
       \expandafter\TX@endtabularx
    \else\toks@\expandafter
       {\the\toks@\end{#1}}\expandafter\TX@get@body\fi}
\def\AddBeforeEndtabularx{\\ \hline}
\title{A Private Quantum Bit String Commitment}
\author[1,2]{Mariana Gama}
\author[1,2]{Paulo Mateus}
\author[1,3]{Andr\'e Souto}
\affil[1]{Instituto de Telecomunica\c{c}\~{o}es, Lisbon, Portugal}
\affil[2]{Department of Mathematics,  Instituto  Superior T\'ecnico, Lisbon, Portugal}
\affil[3]{Department of Informatics, Faculty of Sciences, Lisbon University, Lisbon, Portugal}
\begin{document}
\maketitle

\begin{abstract}
We propose an entanglement-based quantum bit string commitment protocol whose composability is proven in the random oracle model. This protocol has the additional property of preserving the privacy of the committed message. Even though this property is not resilient against man-in-the-middle attacks, this threat can be circumvented by considering that the parties communicate through an authenticated channel. The protocol remains secure (but not private) if we realize the random oracles as physical unclonable functions in the so-called bad PUF model with access before the opening phase. 

\end{abstract}

\maketitle

\section{Introduction}

One of the most basic building blocks of complex cryptosystems is commitment schemes. A commitment scheme is a protocol that allows two mistrustful parties to interact in order to communicate some information that is set up a priori by the sender and that the receiver can only unveil at a later stage.  In other words, it is just as if the message was sent inside a locked box, which can only be opened after the sender hands the key over to the receiver.  The protocol is secure if the receiver cannot learn the message before the sender wishes to unveil it, and the sender cannot change the message after committing to it. Commitment schemes are used in several protocols, such as coin flipping, zero-knowledge proofs, and secure multiparty computation~\cite{Blum:1983:CFT:1008908.1008911,Brassard:1988:MDP:53813.53817,bc_to_ot_Damgrd2009ImprovingTS, review}. Since any weakness in the building blocks affects the security of the overall system, it is important to ensure that they are highly reliable. 

Unfortunately, classical bit commitment (BC) schemes cannot be simultaneously unconditionally secure against a corrupted sender and a corrupted receiver, and Canetti and Fischlin proved that universally composable BC is impossible in the plain model~\cite{canetti_commitment}. In 1996, Lo and Chau~\cite{Lo:1996xn} and independently Mayers~\cite{Mayers:1996te} proved a no-go theorem for unconditionally secure quantum BC in the standard non-relativistic quantum cryptographic framework. Since then, many protocols relying on additional assumptions have been presented. Although secure commitment schemes can be obtained through the exploitation of relativistic constraints, these types of protocols are challenging to implement.

In this paper, we propose a new private commitment protocol, i.e., a commitment where the message is never announced, nor can be derived from the messages exchanged between the parties. This property is attained through the use of entanglement. Since commitment protocols are mostly used as cryptographic primitives, it is of the utmost importance to study their security in different computational environments. As such, a strong emphasis is placed on the composability of these protocols. After characterizing the commitment functionality, the EPR pair trusted source functionality and the random oracle functionality in Section~\ref{preliminaries}, we show in Section~\ref{protocol} that these last two functionalities can be used as a resource to achieve a private commitment protocol with composable security, which is proven in Section~\ref{security}. In Section \ref{pufs}, we analyse the security of the protocol in the bad PUF attack model. Section~\ref{conclusions} features our final conclusions alongside with some directions for future work.

\section{Preliminaries}
\label{preliminaries}

 A bit commitment protocol starts with the \textit{commitment phase}, during which Alice chooses the value $m$ she wants to commit to, and generates the pair $(c,d)$. $c$ is the \emph{commitment}, which she immediately sends to Bob (who outputs a receipt message), and $d$ is the \emph{decommitment}, which she keeps to herself. In the \textit{opening phase}, Alice sends $(b,d)$ to Bob, who can either accept or reject. The protocol is said to be \textit{concealing} if Bob cannot learn Alice's committed message $m$ before the opening phase, and \textit{binding} if Alice cannot change her committed message $m$ after the commitment phase.
 
 The security of commitment protocols can be studied from a stand-alone perspective, with the requirements of concealingness and bindingness. However, since commitments are generally used as a subroutine of more complex tasks, it becomes mandatory for protocols to be secure in any computational environment. In a composable security proof, the parties running the protocol are considered as a single big party which must be indistinguishable from a simulated machine running an ideal functionality for commitment (see Figure~\ref{fig:Fcom}).

\begin{figure}[ht]
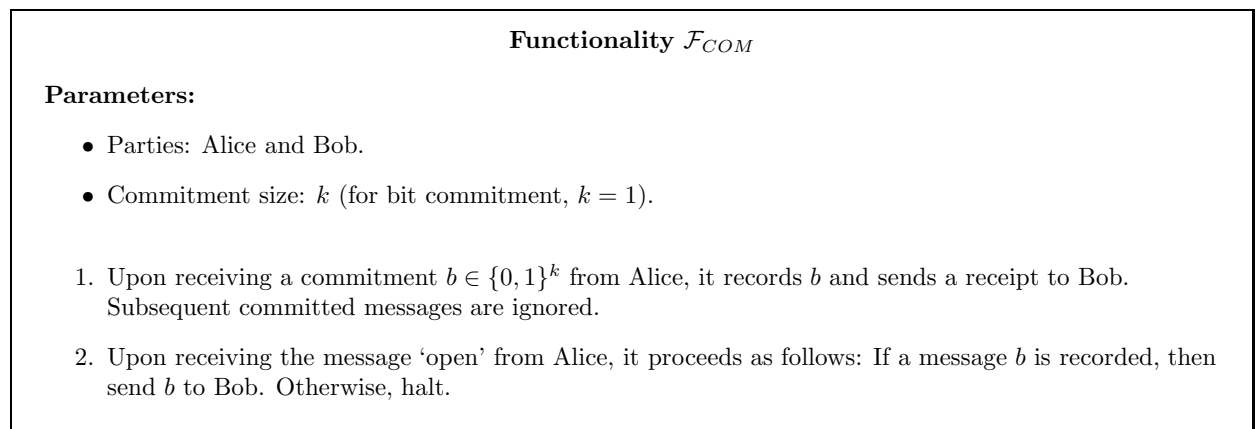

    \centering
    \adjustbox{margin=1em,width=\textwidth,set height=0.5cm,set depth=5.1cm,frame}{
    \begin{minipage}[t]{\textwidth}
    \begin{center}
            \textbf{Functionality} $\mathcal{F}_{COM}$        
    \end{center}
    \flushleft{\textbf{Parameters:}}
    \begin{itemize}
        \item Parties: Alice and Bob.
        \item Commitment size: $k$ (for bit commitment, $k=1$).
    \end{itemize}
    \vspace{0.1cm}
    \begin{enumerate}
        \item Upon receiving a commitment $b\in\{0,1\}^k$ from Alice, it records $b$ and sends a receipt to Bob. Subsequent committed messages are ignored.
        \item Upon receiving the message `open' from Alice, it proceeds as follows: If a message $b$ is recorded, then send $b$ to Bob. Otherwise, halt.
    \end{enumerate}
    
    \end{minipage}
    }
    \caption{Commitment functionality.}
    \label{fig:Fcom}
\end{figure}

In the protocol described in the next section, we assume that the parties have access to two different resources. The first one is an EPR pair trusted source modelled by the functionality in Figure~\ref{fig:F_EPR}. Note that the existence of this source is a very reasonable assumption since entanglement distribution has already been successfully implemented~\cite{zeilinger,Yin1140}. Before the beginning of the protocol, Alice and Bob can additionally sacrifice a small number of entangled pairs to estimate their correlation by using an algorithm such as the one described in Section 6.2 of~\cite{rennerphd}. Even if noisy quantum channels result in a loss of entanglement, the parties can run an entanglement distillation protocol and transform non-maximally entangled shared pairs into a smaller number of maximally entangled ones by using only local operations and classical communication (e.g.~\cite{distillation} and~\cite{distillation_zeilinger} --- the last one is significantly less effective than the first, but has the advantage of being within the reach of current technology). 

\begin{figure}[ht]
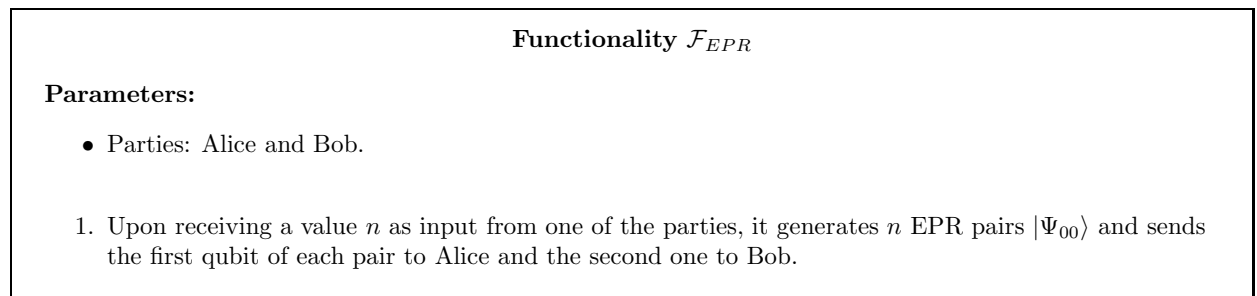

    \centering
    \adjustbox{margin=1em,width=\textwidth,set height=0.5cm,set depth=3.4cm,frame}{
    \begin{minipage}[t]{\textwidth}
    \begin{center}
            \textbf{Functionality} $\mathcal{F}_{EPR}$        
    \end{center}
    \flushleft{\textbf{Parameters:}}
    \begin{itemize}
        \item Parties: Alice and Bob.
    \end{itemize}
    \vspace{0.1cm}
    \begin{enumerate}
        \item Upon receiving a value $n$ as input from one of the parties, it generates $n$ EPR pairs $\ket{\Psi_{00}}$ and sends the first qubit of each pair to Alice and the second one to Bob.
    \end{enumerate}
    \end{minipage}
    }
    \caption{EPR pair source functionality.}
    \label{fig:F_EPR}
\end{figure}

\newpage

The second required resource, described by the functionality $\mathcal{F}_{RO}$ in Figure~\ref{fig:F_RO}, is named random oracle and behaves as an ideal cryptographic hash function, i.e., it maps each query to a fixed and uniformly random output in its range.

\begin{figure}[ht]
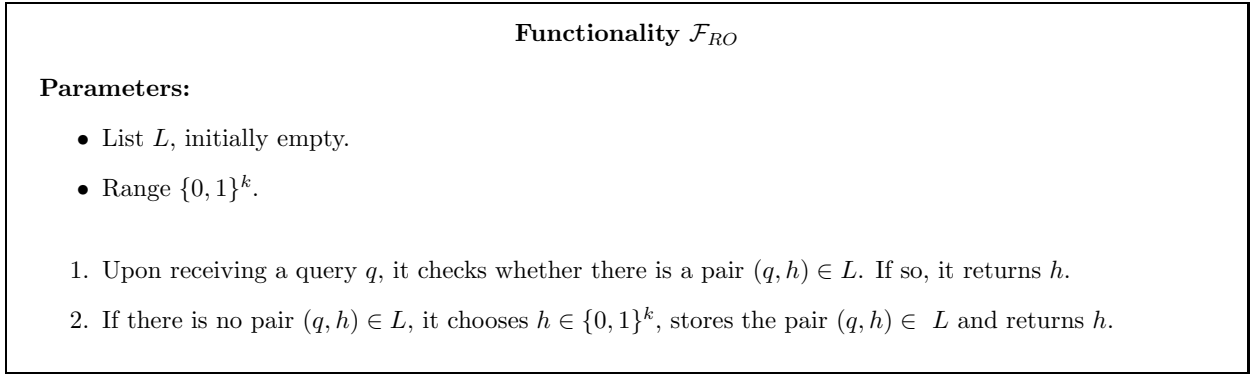

    \centering
    \adjustbox{margin=1em,width=\textwidth,set height=0.5cm,set depth=4.4cm,frame}{
    \begin{minipage}[t]{\textwidth}
    \begin{center}
            \textbf{Functionality} $\mathcal{F}_{RO}$        
    \end{center}
    \flushleft{\textbf{Parameters:}}
    \begin{itemize}
        \item List $L$, initially empty.
        \item Range $\{0,1\}^k$.
    \end{itemize}
    \vspace{0.1cm}
    \begin{enumerate}
        \item Upon receiving a query $q$, it checks whether there is a pair $(q,h)\in L$. If so, it returns $h$.
        \item If there is no pair $(q,h)\in L$, it chooses $h\in\{0,1\}^k$, stores the pair $(q,h)\in~L$ and returns $h$.
    \end{enumerate}
    \end{minipage}
    }
    \caption{Random oracle functionality.}
    \label{fig:F_RO}
\end{figure}

It is essential in our proof that a quantum computer cannot call the random oracle in superposition. Therefore a realizable random oracle implementation cannot be a cryptographic hash function such as SHA. This fact makes the random oracle quite a strong assumption; nevertheless, it can be realized using physical unclonable functions (PUFs). PUFs are physical systems with some microscale structural disorder, which is assumed to be unique to each PUF and unclonable even by the PUF manufacturer. When external stimuli (challenges) are applied to a PUF, its response will depend on the disorder of the device. Therefore, each PUF $P$ implements a unique function $f_P$ that gives responses $r=f_P (c)$ to challenges $c$. For more about PUFs we refer to \cite{puf_foundations} and \cite{puf_commitments}. PUFs have a classical interface, and cannot be run in superposition, even by an all-powerful quantum adversary.

\section{The Proposed Protocol}
\label{protocol}

\begin{protocol}{Private Quantum Bit String Commitment}
\label{theqbc2}
\textbf{Message to be shared: } $m=m_1...m_{2n}$.
\ \\[5mm]
\noindent
\textbf{Setup: }
Alice chooses a message size $2n$ and sends the value $n$ to $\mathcal{F}_{EPR}$. The functionality prepares the state $\ket{\psi} = \bigotimes_{i=1}^{n}\,\ket{\Psi_{00}}$ and sends the odd qubits to Alice and the even ones to Bob.
\ \\[5mm]
\noindent\textbf{Commitment phase:}
  \begin{enumerate}
    \item To commit to a message $m$, Alice generates an uniformly random basis string $b \in \{\{\ket{0},\ket{1}\},\{\ket{+},\ket{-}\}\}^n$, where $\ket{+}=\dfrac{\ket{0}+\ket{1}}{\sqrt{2}}$ and $\ket{-}=\dfrac{\ket{0}-\ket{1}}{\sqrt{2}}$, and measures each of her qubits $i$ in the basis $b_i$, obtaining outcomes $O \in \{0,1\}^n$. She then sends Bob the strings $c_1 = m \oplus \mathsf{H}_1(b|O)$ and $c_2 = \mathsf{H}_2(b)$, where $b|O$ is the concatenation of $b$ and $O$.
  \end{enumerate}
\textbf{Opening phase:}
  \begin{enumerate}[resume]
    \item
    Alice sends the bases $b$ to Bob.
    \item
    If $\mathsf{H}_2(b)= c_2$, Bob accepts the opening, measures each of his qubits $i$ in the basis $b_i$, obtaining outcomes $O \in \{0,1\}^n$,  and calculates $m = c_1 \oplus \mathsf{H}_1(b|O)$. Otherwise, he rejects.
  \end{enumerate}
  \vspace{0.1cm}
\end{protocol}

One of the characteristics of $\mathcal{F}_{COM}$, the functionality for commitments, is that the message is never publicly announced. In most of the existing commitment protocols, nonetheless, the opening step includes sending the message over a public channel. Here we propose a protocol (Protocol~\ref{theqbc2}) that is not only composable but also preserves the privacy of the message. We note that the privacy property is vulnerable to man-in-the-middle attacks: a third party, Eve, can pretend to be the EPR pair trusted source and send different sets of EPR pairs to Alice and Bob and then forward any received message. This can be prevented by adding an authenticated channel between Alice and Bob, as similarly done in quantum key distribution protocols. 

The protocol will use as a resource the EPR pair trusted source functionality (Figure~\ref{fig:F_EPR}) and the random oracle functionality (Figure~\ref{fig:F_RO}) presented in the previous section. It needs two instances of $\mathcal{F}_{RO}$: $\mathsf{H}_1$ with range $\{0,1\}^{2n}$ and $\mathsf{H}_2$ with range $\{0,1\}^n$. Note that, unfortunately, we cannot use the weaker version of the ROM, the global ROM~\cite{com_gro}, since the programmability of the oracle is a key point of our security proof.

\section{Security Analysis}
\label{security}

We proceed now to prove the security of Protocol~\ref{theqbc2} in the Abstract Cryptography framework~\cite{renner_ac} instantiated with quantum Turing machines~\cite{qtm}. The equivalences that need to be satisfied are depicted in Figure~\ref{fig:diagrams_ro}.


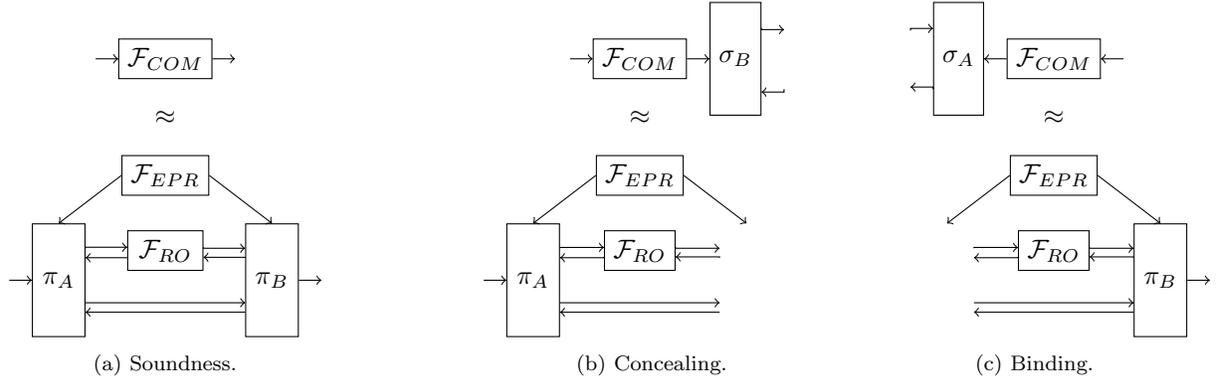
\begin{figure}[ht]
\begin{subfigure}[b]{0.3\textwidth}
\centering
\begin{tikzpicture}[every node/.style={rectangle,draw,node distance = 1cm,transform shape},scale=1]
\draw (0,0) node[minimum height=0.5cm,minimum width=1cm,draw] (1){$\mathcal{F}_{COM}$};
\node[draw=none] (blank_left1) [left=0.3cm of 1] {};
\node[draw=none] (blank_right1) [right=0.3cm of 1] {};
\draw [->] (1.east) |- (blank_right1.west) node[midway,above right,draw=none] {};
\draw [->] (1.west) (blank_left1.east)|-(1.west) node[midway,above right,draw=none] {};

\node[draw=none] (equiv) [below=0.3cm of 1] {$\approx$};
\draw  node[minimum height=0.5cm,minimum width=1cm,draw,rectangle,below = 1.01cm of 1] (epr){$\mathcal{F}_{EPR}$};
\draw  node[minimum height=0.5cm,minimum width=1cm,draw,rectangle,below = 2.01cm of 1] (2){$\mathcal{F}_{RO}$};

\draw  node[minimum height=1.5cm,minimum width=0.7cm,below left= 1.2cm and 0.8cm of equiv] (3) {$\pi_A$};
\node[draw=none] (blank_left3) [left=0.3cm of 3] {};
\draw[->] (blank_left3.east)|- (3.west)  node[midway,above right,draw=none] {};

\draw  node[minimum height=1.5cm,minimum width=0.7cm,draw,below right = 1.2 and 0.8cm of equiv] (4) {$\pi_B$};
\node[draw=none] (blank_right4) [right=0.3cm of 4] {};
\draw[->] (4.east) |- (blank_right4.west) node[midway,above right,draw=none] {};

\draw[->](epr.west) -- (3.north);
\draw[->](epr.east) -- (4.north);

\draw[->] (2.190)|-(3.40) node[midway,above left,draw=none] {};
\draw[->] (3.50)|-(2.175) node[midway,above left,draw=none] {};
\draw[->] (2.5) |- (4.130) node[midway,above right,draw=none] {};
\draw[->] (4.140)|-(2.350) node[midway,above left,draw=none] {};

\draw[->] (3.320) |- (4.220) node[midway,above left = 0.01cm and 1cm,draw=none] {};
\draw[->] (4.230) |- (3.310) node[midway,above left = 0.01cm and 1cm,draw=none] {};
\end{tikzpicture}
\caption{Soundness.}
\label{fig:honest_ro}
\end{subfigure}
\hfill
\begin{subfigure}[b]{0.3\textwidth}
\centering
\begin{tikzpicture}[every node/.style={rectangle,draw,node distance = 1cm,transform shape},scale=1]

\draw (0,0) node[minimum height=0.5cm,minimum width=1cm,draw] (1){$\mathcal{F}_{COM}$};
\node[draw=none] (blank_left1) [left=0.3cm of 1] {};
\draw node[minimum height=1.5cm,minimum width=0.5cm,right=0.3cm of 1] (blank_right1) {$\sigma_B$};
\draw [->] (1.east) |- (blank_right1.west) node[midway,above right,draw=none] {};
\draw [->] (blank_left1.east) |- (1.west) node[midway,above left,draw=none] {};

\draw node[draw=none,minimum height=1.5cm,minimum width=0.82cm,right=0.3cm of blank_right1] (pi*) {};
\draw[->] (blank_right1.50) |- (pi*.137) node[midway,above right,draw=none] {};
\draw[->] (pi*.224) |- (blank_right1.308) node[midway,above left,draw=none] {};
\node[draw=none] (blank_right_extra) [right=0.8cm of pi*] {};

\node[draw=none] (equiv) [below=0.3cm of 1] {$\approx$};
\draw  node[minimum height=0.5cm,minimum width=1cm,draw,rectangle,below = 1.01cm of 1] (epr){$\mathcal{F}_{EPR}$};
\draw  node[minimum height=0.5cm,minimum width=0.8cm,draw,rectangle,below = 2.01cm of 1] (2){$\mathcal{F}_{RO}$};

\draw  node[minimum height=1.5cm,minimum width=0.7cm,below left= 1.2cm and 0.8cm of equiv] (3) {$\pi_A$};
\node[draw=none] (blank_left3) [left=0.3cm of 3] {};
\draw[->] (blank_left3.east) |- (3.west) node[midway,above left,draw=none] {};

\draw  node[draw=none,minimum height=1.5cm,minimum width=0.7cm,below right = 1.2 and 0.8cm of equiv] (4) {};

\draw[->](epr.west) -- (3.north);
\draw[->](epr.east) -- (4.north);

\draw[->] (2.190)|-(3.40) node[midway,above left,draw=none] {};
\draw[->] (3.50)|-(2.175) node[midway,above left,draw=none] {};
\draw[->] (2.5) |- (4.130) node[midway,above right,draw=none] {};
\draw[->] (4.140)|-(2.350) node[midway,above left,draw=none] {};

\draw[->] (3.320) |- (4.220) node[midway,above left = 0.01cm and 1cm,draw=none] {};
\draw[->] (4.230) |- (3.310) node[midway,above left = 0.01cm and 1cm,draw=none] {};
\end{tikzpicture}
\caption{Concealing.}
\label{fig:bob_ro}
\end{subfigure}
\hfill \hspace{-1.5cm}
\begin{subfigure}[b]{0.3\textwidth}
\centering
\begin{tikzpicture}[every node/.style={rectangle,draw,node distance = 1cm,transform shape},scale=1]

\draw (0,0) node[minimum height=0.5cm,minimum width=1cm,draw] (1){$\mathcal{F}_{COM}$};
\draw node[minimum height=1.5cm,minimum width=0.5cm,left=0.3cm of 1] (blank_left1) {$\sigma_A$};
\node[draw=none] (blank_right1) [right=0.3cm of 1] {};
\draw [->] (blank_right1.west) |- (1.east) node[midway,above right,draw=none] {};
\draw [->] (1.west) |- (blank_left1.east) node[midway,above left,draw=none] {};

\draw node[draw=none,minimum height=1.5cm,minimum width=0.82cm,left=0.3cm of blank_left1] (pi*) {};
\draw[->] (blank_left1.230) |- (pi*.318) node[midway,above left,draw=none] {};
\draw[->] (pi*.42) |- (blank_left1.130) node[midway,above right,draw=none] {};

\node[draw=none] (equiv) [below=0.3cm of 1] {$\approx$};
\draw  node[minimum height=0.5cm,minimum width=1cm,draw,rectangle,below = 1.01cm of 1] (epr){$\mathcal{F}_{EPR}$};
\draw  node[minimum height=0.5cm,minimum width=0.8cm,draw,rectangle,below = 2.01cm of 1] (2){$\mathcal{F}_{RO}$};

\draw  node[draw=none,minimum height=1.5cm,minimum width=0.7cm,below left= 1.2cm and 0.8cm of equiv] (3) {};

\draw[->](epr.west) -- (3.north);
\draw[->](epr.east) -- (4.north);

\draw  node[minimum height=1.5cm,minimum width=0.7cm,draw,below right = 1.2 and 0.8cm of equiv] (4) {$\pi_B$};
\node[draw=none] (blank_right4) [right=0.3cm of 4] {};
\draw[->] (4.east) |- (blank_right4.west) node[midway,above right,draw=none] {};

\draw[->] (2.190)|-(3.40) node[midway,above left,draw=none] {};
\draw[->] (3.50)|-(2.175) node[midway,above left,draw=none] {};
\draw[->] (2.5) |- (4.130) node[midway,above right,draw=none] {};
\draw[->] (4.140)|-(2.350) node[midway,above left,draw=none] {};

\draw[->] (4.230) |- (3.310) node[midway,above left = 0.01cm and 1cm,draw=none] {};
\draw[->] (3.320) |- (4.220) node[midway,above left = 0.01cm and 1cm,draw=none] {};
\end{tikzpicture}
\caption{Binding.}
\label{fig:alice_ro}
\end{subfigure}
\caption{Conditions for the constructability of the resource $\mathcal{F}_{COM}$ from the resources $\mathcal{F}_{EPR}$ and $\mathcal{F}_{RO}$. Diagram (\subref{fig:honest_ro}) corresponds to the soundness property by showing the equivalence between the ideal commitment functionality $\mathcal{F}_{COM}$ and the protocol for honest parties (Alice and Bob behave according to $\pi_A$ and $\pi_B$, respectively). Diagrams (\subref{fig:bob_ro}) and (\subref{fig:alice_ro}) correspond to security against dishonest Bob and Alice, respectively. Since the algorithm they follow is unknown, $\pi_A$ and $\pi_B$ are removed from the respective real system, while the simulators $\sigma_A$ and $\sigma_B$ are respectively added to the ideal system.}
\label{fig:diagrams_ro}
\end{figure}


\begin{theorem}
Protocol~\ref{theqbc2} constructs from $\mathcal{F}_{EPR}$ and $\mathcal{F}_{RO}$ a resource that is within a negligible distance from the resource $\mathcal{F}_{COM}$ for simulators and distinguishers modelled as quantum Turing machines.
\end{theorem}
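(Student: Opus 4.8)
The plan is to verify the three equivalences of Figure~\ref{fig:diagrams_ro} one at a time — soundness, concealing and binding — each by exhibiting the relevant simulator and bounding the advantage of an arbitrary quantum Turing machine distinguisher by a function negligible in $n$. Throughout I would make essential use of the assumption, stressed after Figure~\ref{fig:F_RO}, that no party can query $\mathsf{H}_1,\mathsf{H}_2$ in superposition: this lets each simulator read every oracle query in the clear and, since $\mathcal{F}_{RO}$ is programmable, fix responses adaptively. I would also observe once and for all that both simulators and all the book-keeping below run in polynomial time, so they are admissible converters.

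\emph{Soundness.} For honest parties I would directly compare the joint input--output behaviour of the real system $\pi_A(\mathcal{F}_{EPR},\mathcal{F}_{RO})\pi_B$ with $\mathcal{F}_{COM}$. Because $\mathcal{F}_{EPR}$ distributes maximally entangled pairs $\ket{\Psi_{00}}$, measuring both halves of a pair in the same basis $b_i$ yields perfectly correlated outcomes; hence the string $O$ recovered by Bob in the opening phase coincides with the one obtained by Alice, the test $\mathsf{H}_2(b)=c_2$ passes, and $c_1\oplus\mathsf{H}_1(b|O)=m$. The real system therefore reproduces $\mathcal{F}_{COM}$ exactly, giving zero distinguishing advantage.

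\emph{Concealing.} For a dishonest Bob I would construct $\sigma_B$ that internally emulates $\mathcal{F}_{EPR}$ and $\mathcal{F}_{RO}$ towards the distinguisher while talking to $\mathcal{F}_{COM}$. On the receipt it outputs a uniformly random $c_1\in\{0,1\}^{2n}$ and $c_2=\mathsf{H}_2(b')$ for a freshly sampled basis $b'$; when $\mathcal{F}_{COM}$ later releases $m$, it reveals $b=b'$ and, at Bob's query on the unique argument $b|O$ fixed by his measurement outcome, programs $\mathsf{H}_1(b|O):=c_1\oplus m$. The two systems can differ only if Bob queries $\mathsf{H}_1$ at this critical point before $b$ is revealed; but until then he has no information about Alice's basis, so the probability of hitting the correct $n$-bit basis together with the matching outcome is at most $q\,2^{-\Theta(n)}$ for $q$ queries, which is negligible. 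Conditioned on this not occurring, a uniform $c_1$ is distributed exactly as the one-time pad $m\oplus\mathsf{H}_1(b|O)$ and the two views coincide.

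\emph{Binding.} This is the step I expect to be the crux, and it is precisely where programmability and classical (non-superposition) access are indispensable. For a dishonest Alice I would let $\sigma_A$ emulate $\mathcal{F}_{EPR}$ and $\mathcal{F}_{RO}$ while keeping Bob's halves of the EPR pairs internally. On receiving $(c_1,c_2)$ it searches Alice's recorded $\mathsf{H}_2$-queries for a $b$ with $\mathsf{H}_2(b)=c_2$; by collision-freeness such a preimage is unique except with probability $\binom{q}{2}2^{-n}$, and if none has been queried then Alice can open only by later hitting a fresh value of $\mathsf{H}_2$ equal to $c_2$, which happens with probability at most $q\,2^{-n}$ — so with overwhelming probability any basis she can successfully reveal was already queried at commitment time. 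Given $b$, the simulator measures Bob's stored qubits in basis $b$ to obtain $O$, reads (programming if necessary) $\mathsf{H}_1(b|O)$, extracts $m:=c_1\oplus\mathsf{H}_1(b|O)$ and commits it to $\mathcal{F}_{COM}$; when Alice later opens with a basis passing the check it sends `open'. Since Alice may only act locally on her own qubits, Bob's reduced state is untouched, so the outcome $O$ obtained by the simulator has the same distribution as the one the honest Bob would measure; together with uniqueness of $b$ this forces the ideal-world opened message to match the real one. Collecting the three estimates — zero for soundness and the negligible guessing/collision terms for concealing and binding — yields the claimed negligible distance, completing the construction of $\mathcal{F}_{COM}$ from $\mathcal{F}_{EPR}$ and $\mathcal{F}_{RO}$.
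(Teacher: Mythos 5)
Your proposal is correct and follows essentially the same route as the paper: the same three-part decomposition into soundness, concealing and binding, the same simulator $\sigma_B$ that programs $\mathsf{H}_1(b|O):=c_1\oplus m$, and the same simulator $\sigma_A$ that extracts $m$ from the commitment strings and relies on collision-resistance of $\mathsf{H}_2$. You are in fact somewhat more explicit than the paper on two points it leaves implicit — how $\sigma_A$ recovers $b$ from the logged $\mathsf{H}_2$-queries, and the bound on a dishonest Bob querying $\mathsf{H}_1$ at the critical input before the opening — but these are refinements of, not departures from, the paper's argument.
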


\begin{proof}
This proof will be divided into three parts, one for each of the required equivalences.

\vspace{4mm}
\noindent\textbf{Soundness}
\vspace{2mm}

Let $\ket{\psi}$ be the overall state of the system after Step 1. Note that
\begin{equation*}
    \ket{\Psi_{00}} = \frac{1}{\sqrt{2}}(\ket{00}+\ket{11}) =  \frac{1}{\sqrt{2}}(\ket{++}+\ket{--}),
\end{equation*}
so when Alice measures each of her qubits, the corresponding EPR pair will collapse to either $\ket{00}$ or $\ket{11}$ (for $b_i = \{\ket{0},\ket{1}\}$), or to either $\ket{++}$ or $\ket{--}$ (for $b_i = \{\ket{+},\ket{-}\}$). Therefore, when Bob measures each of his qubits $i$ in the basis $b'_i=b_i$ he received from Alice in the opening phase, he will get exactly the same outcome as Alice, $O'_i = O_i$, implying that $\mathsf{H}_1(b'|O')=\mathsf{H}_1(b|O)$. Bob will then retrieve the message successfully, since $c_1\oplus\mathsf{H}_1(b'|O') = m\oplus\mathsf{H}_1(b|O)\oplus\mathsf{H}_1(b'|O') = m$.

\vspace{4mm}
\noindent\textbf{Concealingness}
\vspace{2mm}

Given any behaviour of a dishonest receiver, we have to construct a simulator $\sigma_B$ that simulates $\mathsf{H}_1$, $\mathsf{H}_2$, and $\mathcal{F}_{EPR}$ and provides the receiver with a commitment that can later be opened to the message in $\mathcal{F}_{COM}$. Consider the following program for $\sigma_B$:
\begin{itemize}
    \item \emph{Simulation of $\mathsf{H}_1$:} Whenever $\sigma_B$ receives the query $b|O$ to $\mathsf{H}_1$, it answers with $h = m \oplus c_1$. In all other cases it returns a value $h$ as the ideal functionality would do and keeps $(q,h)$ on a list of queries and respective answers. 

    \item \emph{Simulation of $\mathsf{H}_2$:} Whenever $\sigma_B$ receives queries $q$ to $\mathsf{H}_2$, it returns a value $h$ as the ideal functionality would do and keeps $(q,h)$ on a list of queries and respective answers. 
    
    \item \emph{Simulation of $\mathcal{F}_{EPR}$:} During the setup phase, $\sigma_B$ generates the state $\ket{\psi}=\bigotimes_{i=1}^{n}\,\ket{\Psi_{00}}$, sends the even qubits to the corrupted receiver and keeps the odd ones to itself. 
    \item During the commitment phase, upon receiving the receipt from $\mathcal{F}_{COM}$, $\sigma_B$ chooses two uniformly random strings, $c_1\in\{0,1\}^{2n}$ and $b \in \{\{\ket{0},\ket{1}\},\{\ket{+},\ket{-}\}\}^n$, and measures each of its qubits $i$ in the basis $b_i$, obtaining outcomes $O\in \{0,1\}^n$. It then sends $c_1$ and $c_2=\mathsf{H}_2(b)$ to the corrupted receiver.
    
    \item During the opening phase, upon receiving the message $m$ from $\mathcal{F}_{COM}$, $\sigma_B$ sends the bases $b$ to the corrupted receiver.
\end{itemize}

The behaviour of $\sigma_B$ is the same regardless of the message that was sent to $\mathcal{F}_{COM}$, and hence there is no algorithm for the dishonest receiver allowing him to guess the committed message with probability greater than $1/2^{2n}$.

\vspace{4mm}
\noindent\textbf{Bindingness}
\vspace{2mm}

Given any behaviour of a dishonest sender, we have to construct a simulator $\sigma_A$ that simulates $\mathsf{H}_1$, $\mathsf{H}_2$, and $\mathcal{F}_{EPR}$ and retrieves the message $m$ from the sender's commitment values and sends it to $\mathcal{F}_{COM}$. It must also be able to detect when the sender is cheating and, whenever that happens, not send the opening message to $\mathcal{F}_{COM}$. Consider the following program for $\sigma_A$:
\begin{itemize}
    \item \emph{Simulation of $\mathsf{H}_1$ and $\mathsf{H}_2$:} Whenever $\sigma_A$ receives queries $q$ to $\mathsf{H}_1$ or $\mathsf{H}_2$, it returns a value $h$ as the ideal functionality would do and keeps $(q,h)$ on a list of queries and respective answers.
    
    \item \emph{Simulation of $\mathcal{F}_{EPR}$:} During the setup phase, $\sigma_A$ generates the state $\ket{\psi}=\bigotimes_{i=1}^{n}\,\ket{\Psi_{00}}$, sends the odd qubits to the corrupted sender and keeps the even ones to itself.
    
    \item During the commitment phase, upon receiving the commitment strings $c_1$ and $c_2$ from the corrupted sender, $\sigma_A$ sends $m = c_1 \oplus \mathsf{H}_1 (b|O)$ to $\mathcal{F}_{COM}$.
    
    \item During the opening phase, upon receiving the basis string $b'$ from the corrupted sender, $\sigma_A$ sends the message `open' to $\mathcal{F}_{COM}$ if $b'=b$. Otherwise, it does not open the commitment.
\end{itemize}

The real world receiver outputs error whenever the string $b'$ sent by the sender is such that $\mathsf{H}_2(b') \neq \mathsf{H}_2 (b)$. From the soundness property, we know that when $b'=b$ the receiver correctly retrieves the message. We are interested in the situation where $b' \neq b$ (in which case the commitment will not be opened in the ideal world) and $\mathsf{H}_2(b') = \mathsf{H}_2 (b)$. Since $\mathcal{F}_{RO}$ is collision-resistant, this can only happen with negligible probability.
\end{proof}

The addition of an authenticated communication channel makes this protocol a private and composable commitment protocol, which is yet to be achieved by classical cryptography based on the same assumptions.

\section{Analysis in the Realistic Bad PUF Model}
\label{pufs}



In order to study the security of PUF applications in a realistic scenario, two attack models are described in \cite{pufs_ieee}: the PUF re-use model and the bad PUF model. In the PUF re-use model, we assume some PUFs are used more than once throughout the protocol and the adversary has access to the PUFs more than once. In the bad PUF model, the fact that PUFs are real physical objects is exploited, and we consider both the simulatable bad PUFs, which possess a simulation algorithm that can be used by the manufacturer to compute responses to challenges and the challenge-logging bad PUFs, which allow the manufacturer to access a memory module in the device and read all the challenges applied to it (this malicious feature could also be added by an adversary after the construction of the PUF). The notion of strong PUFs is also described. Strong PUF is a type of PUF with a public interface (i.e., anyone holding it can apply challenges and read the responses), a large number of possible challenges and behaviour so complex that it cannot be modelled to predict responses to challenges. In our brief analysis, we consider that in the proposed protocol (Protocol \ref{theqbc2}) the ROM is replaced by strong PUFs.

Note that Protocol \ref{theqbc2} is secure in the bad PUF model if we also consider that Alice sends the message $m$ to Bob in the opening phase, thus giving up the privacy property.  The security holds independently of whether the malicious party has access to the PUFs before the opening phase or not. This follows from the fact that considering the PUFs are manufactured by Alice and she can find collisions in $\mathsf{H}_1$ and $\mathsf{H}_2$, she will still not know what message to open to in order to match the one calculated by Bob since the outcomes of his measurements of qubits in incorrect bases will be uniformly random.




\section{Conclusions}
\label{conclusions}

With this work, we achieved a commitment protocol that is not only composable but also private, since the message is never publicly announced. Man-in-the-middle attacks can be prevented by adding an authenticated channel. We suggest the use of physical unclonable functions to model random oracles, and note that the protocol remains secure (although not private) if we consider the bad PUF attack model with access before the opening phase, which has been proven impossible for classical bit commitment without other assumptions. 

Additionally, it is also of interest to further study how to obtain composability in commitment schemes while using the minimum possible assumptions (for more on this topic see~\cite{minimal_assumptions}), and which of these assumptions are needed to achieve privacy.

\section*{Acknowledgments}
The authors acknowledge the support of SQIG (Security and Quantum Information Group), the Instituto de Telecomunicações (IT) Research Unit, Ref. UIDB/EEA/50008/2020, funded by Fundação para a Ciência e Tecnologia e Ministério Ciêncica, Tecnologia e Ensino Superior (FCT/MCTES), and the FCT projects Confident PTDC/EEI-CTP/4503/2014, QuantumMining POCI-01-0145-FEDER-031826, and Predict PTDC/CCI-CIF/29877/2017, supported by the European Regional Development Fund (FEDER), through the Competitiveness and Internationalization Operational Programme (COMPETE 2020), and by the Regional Operational Program of Lisbon. A.S. acknowledges funds granted to Laboratório de Sistemas Informáticos de Grande Escala (LASIGE) Research Unit, Ref. UIDB/00408/2020. M.G. also acknowledges the support of the Calouste Gulbenkian Foundation through the New Talents in Quantum Technologies Programme.

\newpage

\bibliographystyle{alpha}
\bibliography{references}

\end{document}